\crefname{hypothesis}{Hypothesis}{Hypotheses}
\title{Bounding the approach to oligarchy in a variant of the yard-sale model}
\author{David W. Cohen\thanks{Department of Mathematics, Tufts University, Medford, MA 
  (\email{david.cohen@tufts.edu}, \url{https://sites.tufts.edu/davidcohen/}).\funding{The first author acknowledges the support of the NDSEG fellowship.}} \and Bruce M. Boghosian\thanks{Department of Mathematics, Tufts University, Medford, MA  (\email{bruce.boghosian@tufts.edu}).}}
\newcommand{\ysmmin}[2]{\left(#1\wedge#2\right)}
\newcommand{\ddw}[1]{\frac{\partial #1}{\partial w}}
\newcommand{\twoddw}[1]{\frac{\partial^2 #1}{\partial w^2}}
\newcommand{\ddt}[1]{\frac{\partial #1}{\partial t}}
\newcommand{\totdd}[2]{\frac{d #1}{d #2}}
\newcommand{\tottwodd}[2]{\frac{d^2 #1}{d {#2}^2}}
\newcommand{\frechet}[2]{\frac{\delta #1}{\delta #2}}
\newcommand{\Ddiff}[1]{D\left[w, #1\right]}
\newcommand{\ysm}{Yard-Sale Model}
\begin{document}

\maketitle

\begin{abstract}
  We present analytical results for the Gini coefficient of economic inequality under the dynamics of a modified Yard-Sale Model of kinetic asset exchange. A variant of the Yard-Sale Model is introduced by modifying the underlying binary transaction of the classical system. It is shown that the Gini coefficient is monotone under the resulting dynamics but the approach to oligarchy, as measured by the Gini index, can be bounded by a first-order differential inequality used in conjunction with the differential Gr\"onwall inequality. This result is in the spirit of entropy -- entropy production inequalities for diffusive PDE. The asymptotics of the modified system, with a redistributive tax, are derived and shown to agree with the original, taxed Yard-Sale Model, which implies the modified system is as suitable for matching real wealth distributions. The Gini -- Gini production inequality is shown to hold for a broader class of models.
\end{abstract}

\begin{keywords}
  yard-sale model, econophysics, non-linear Fokker-Planck equation, mean-field theory, McKean-Vlasov equations
\end{keywords}

\begin{AMS}
  91B80, 82C22, 82C31
\end{AMS}

\section{Introduction}
The \ysm{} is a well-studied model of kinetic asset exchange introduced by A. Chakraborti and named by B. Hayes \cite{AC2002, BH2002}. At its core, the \ysm{} is a specification of how a wealth transaction is to be conducted between two agents randomly selected from a population. From this point, the system may be altered and then studied as a stochastic finite-agent system or, through the use of thermodynamic limits and other techniques from mathematical physics, as a deterministic, continuum equation of motion for the distribution of wealth.

The classical \ysm{} transaction is stochastic: At each integer time $t$, two agents from an $N$ agent population are selected at random without replacement and their wealths are updated according to the rule \begin{equation}\label{eq:ysmMicroTransact}
\begin{pmatrix}w_{t+1}^i\\w_{t+1}^j\end{pmatrix} = \begin{pmatrix}w_{t}^i\\w_{t}^j\end{pmatrix} + \sqrt{\gamma} \ysmmin{w_t^i}{w_t^j} \begin{pmatrix}1\\-1\end{pmatrix}\eta,
\end{equation} where $\gamma\in(0,1)$ is a transaction intensity parameter, $\eta$ is a random variable with outcomes $-1$ and $+1$ with equal probability, and $\wedge$ is the $\min$ operator.

Under this rule the expected change of wealth for each agent is zero yet it is well known that wealth condenses and an oligarchy forms as time progresses. The definition of the classical transaction rule is premised on the belief that the poorer agent's wealth should determine the magnitude of a binary transaction and that the changes in wealth are associated to a lack of perfect information in a given transaction.\footnote{Informally, in a transaction with a very wealthy agent, a poorer agent is willing to stake a small fraction of their wealth but cannot possibly stake the same fraction of the other's wealth as losing would horrifically bankrupt the poorer agent.} Adhering to this principle, the \ysm{} is the simplest, non-trivial stochastic transaction that cannot send an agent to negative wealth when all agents are initialized with positive wealth.

In \cite{BMB2014b} a time step is introduced into the transaction \cref{eq:ysmMicroTransact} to give an infinitesimal characterization of the process by a limiting procedure involving the number of agents and time step combined with an analogue of the molecular chaos assumption (\textit{Stosszahlansatz}) -- termed the random-agent approximation. This procedure produces a deterministic continuum equation. The equation of motion for the probability distribution of agents in wealth-space under the classical \ysm{} dynamics of \cref{eq:ysmMicroTransact} is \begin{equation}\label{eq:ysm}\ddt{\rho(w,t)}= \twoddw{}\left[\left(\frac{\gamma}{2}\int_0^\infty\,dx\,\ysmmin{w}{x}^2\rho(x,t)\right)\rho(w,t)\right].\end{equation}

A frequently studied summary statistic of distributions of wealth is the Gini coefficient of economic inequality, which was introduced by Corrado Gini in 1912 \cite{CG1912,MR3012052}. The Gini coefficient maps a wealth distribution to a value in $[0,1]$. Values near zero correspond to more egalitarian distributions, whereas values of the Gini approaching unity indicate inequality and oligarchy. If $\mu$ is the mean wealth of a distribution of wealth with density $\rho$ and $X,Y$ are i.i.d. random variables with density $\rho$ then \[G[\rho]:=\frac{\mathbb{E}\left[|X-Y|\right]}{2\mu}.\]

A Lorenz curve represents a distribution of wealth in the unit square, $[0,1]\times [0,1]$, by plotting on the abscissa the fraction of a population with wealth less than $w$ and the fraction of total wealth held by this subset of the population on the ordinate. More precisely, the Lorenz curve is a $w$-parameterized plot of \[\left(F(w),L(w)\right) =  \left(\frac{\int_0^w\,dy\,\rho(y) }{\int_0^\infty\,dy\,\rho(y)},\frac{\int_0^w\,dy\,y\rho(y) }{\int_0^\infty\,dy\,y\rho(y)}\right)\] The Gini coefficient also has a geometric interpretation when the Lorenz curve is used to represent a distribution of wealth. If the population were to all have the mean wealth, then the Lorenz curve would be the identity and correspond to an egalitarian society. As a population moves toward total oligarchy, the Lorenz curve is pushed into the bottom right corner of the unit square. The Gini coefficient can be equivalently defined as twice the area between the diagonal and the Lorenz curve of a distribution of wealth -- this varies between zero and unity. For the functions $F$ and $L$ defined above in terms of $\rho$, we have \[G[\rho] = 2\int_0^1\,ds\, \left(s-(L\circ F^{-1})(s)\right).\]

Boghosian et al. in \cite{BMB2014a} established that the Gini coefficient is monotone under the dynamics of the continuum model \cref{eq:ysm} of the classical \ysm{}. 
This result was shown both for the master equation and the resulting non-linear partial integro-differential Fokker-Planck equation first derived in \cite{BMB2014b}. Thus the Gini coefficient is a Lyapunov functional \cite{TDF2010, ANM2015} for the \ysm. Chorro showed via a martingale convergence theorem argument in \cite{CC2016} that the finite-agent system likewise approaches oligarchy (in a probabilistic sense). B\"orgers and Greengard in \cite{CB2023} produced yet simpler proofs of wealth condensation for the classical finite-agent system. B\"orgers and Greengard \cite{CB2023} and Boghosian \cite{BMB2017} have similar results for transactions that are biased in favor of the wealthier agent. Beyond the results for the \ysm{} and its variants, Cardoso et al. in \cite{BHFC2023, BHFC2021} and F. Cao and S. Motsch in \cite{MR4619925} have shown that wealth condensation is more likely the rule than the exception for more general unbiased binary exchanges.

Despite the many methods of showing that wealth condensation occurs, we are not aware of any explicit bounds on the rate of increase of the Gini coefficient under models for which the Gini coefficient increases monotonically.

Cao in \cite{FC2023} investigated the rate of change of the Gini coefficient under the dynamics of the repeated averaging model (sometimes called the divorce model) where the binary transaction sends each agent to the mean wealth of the two transacting agents. Cao mentions that in ``econophysics literature, analytical results on [the] Gini index are comparatively rare.''

Here we produce an analytic result on the Gini coefficient for a modified, yet reasonable, \ysm{} that steps beyond the standard propositions of monotonicity. In particular, the result is a bound on the rate of production of inequality, as measured by the Gini coefficient, under the modified dynamics. From a physical perspective, this is akin to finding a bound on the rate of entropy production while still having a thermodynamic second law.

That is to say, the Gini -- Gini production relation derived below is similar to entropy -- entropy production inequalities for diffusive PDE. For the systems examined in \cite{MR3497125}, the entropy production bound gives a minimum rate of decrease for the relative entropy. Whereas here the Gini -- Gini production bound gives a maximum rate of increase for the measure of economic inequality.

Importantly, the main theorem generalizes to a broader class of models than just those in which the microtransaction permits a discrete set of outcomes.

The paper is organized as follows. The variant of the \ysm{} on which the present paper focuses is motivated and defined in \cref{sec:mysm}. The Gini coefficient is briefly reviewed in \cref{sec:gini} with particular focus on its invariance under a normalization of the equations of motion. In \cref{sec:giniTime} it is proven both that the Gini coefficient increases monotonically in time under the induced dynamics and that its rate of increase may be bounded. This result is then re-stated for a more general class of evolutionary models. The evolutionary, integro-differential PDE are numerically solved to demonstrate the bound holding in experiment. Plots and descriptions of the numerical method are included. The asymptotics of the modified system when a redistributive tax is incorporated are derived in \cref{sec:asymptotics} and shown to match the classical \ysm{} with taxation.

\section{The modified \ysm}
\label{sec:mysm}
Henceforth we sometimes use the abbreviation YSM for the Yard-Sale Model.

Let there be $N>1$ agents each with a dimensionally-meaningful wealth $\theta^i_s$ indexed by $i=1,\ldots,N$ and a time $s\geq0$. Time subscripts $s$ are occasionally omitted for clarity when the time is unimportant for the expression. Let $W_\theta:=\sum \theta^i$ be the total wealth and $\mu_\theta:=W_\theta/N$ be the average wealth per agent. To each of $N$ agents, associate a dimensionless quantity $w^i:=\theta^i/\mu_\theta$ obtained by dividing wealth by mean wealth.\footnote{In what follows, we still call $w$ wealth despite its dimensionless nature.} 

Let $k=0,1,2,\ldots$ and $\Delta t \in (0,1)$. The modified version of the \ysm{} has a binary transaction between agents indexed by $i$ and $j$ given by \begin{equation}\label{eq:mysmMicroTransact}
\begin{pmatrix}w_{(k+1)\Delta t}^i\\w_{(k+1)\Delta t}^j\end{pmatrix} = \begin{pmatrix}w_{k\Delta t}^i\\w_{k \Delta t}^j\end{pmatrix} + \sqrt{\gamma\Delta t} \phi\left(w^i_{k \Delta t},w^j_{k \Delta t}\right) \begin{pmatrix}1\\-1\end{pmatrix}\eta,
\end{equation} 

with \begin{equation}\label{eq:mysmMicroKernel}
\phi\left(w^i,w^j\right) = \begin{cases}
(w^i \wedge w^j) &\text{\quad if $w^i\wedge w^j <1$;} \\
\sqrt{ (w^i \wedge w^j)}&\text{\quad if $w^i\wedge w^j \geq1$,}
\end{cases} 
\end{equation} where $\gamma\in(0,1)$, $\wedge$ is the $\min$ operator, and $\eta$ is a random variable that takes values $-1$ and $+1$ with equal probability. The standard YSM has $w^i \wedge w^j$ outside of the square root function regardless of the size of $w^i \wedge w^j$. The importance of the square root in \cref{eq:mysmMicroKernel} will become clearer below in light of the more general statement of \cref{cor:otherModels} and the relation of the diffusion coefficient kernel to that of the Gini coefficient.

The expectation of the transaction is zero, that is \[\mathbb{E}\left[\begin{pmatrix}w_{(k+1)\Delta t}^i\\w_{(k+1)\Delta t}^j\end{pmatrix}-\begin{pmatrix}w_{k\Delta t}^i\\w_{k \Delta t}^j\end{pmatrix}\right]=0,\] and in this way the process is a martingale. The variance of the transaction is \begin{equation}\label{eq:mysmVar}
    \mathbb{E}\left[\left(\begin{pmatrix}w_{(k+1)\Delta t}^i\\w_{(k+1)\Delta t}^j\end{pmatrix}-\begin{pmatrix}w_{k\Delta t}^i\\w_{k \Delta t}^j\end{pmatrix}\right)^2\right] = \gamma \Delta t \left(\phi\left(w_{k\Delta t}^i,w_{k\Delta t}^j\right)\right)^2
\end{equation}


This modification maintains the important features that the poorer agent's wealth is the determining quantity in the exchange and no exchange can send an agent to negative wealth. Note that the comparison of dimensionless wealth in the piecewise micro-transaction is equivalent to checking if the minimum of the two dimension-full wealths is above or below mean wealth $\mu_{\theta}$.

Under the same assumptions common in mathematical physics that are laid out in \cite{BMB2014c} and \cite{BMB2014b} -- namely assuming independence of the laws of each agent as the number of agents goes to infinity -- the limit as $\Delta t \rightarrow 0$ and $N\rightarrow \infty$ leads to a continuum equation for the law of a single prototypical agent in the population. The nonlinear evolution equation is found via the Kramers-Moyal expansion of the Chapman-Kolmogorov equation after making the random-agent approximation \cite{BMB2014b} and applying Pawula's Theorem. A more formal derivation from the stochastic process would use methods from the propagation of chaos literature on interacting particle systems of the Boltzmann type, see \cite{MR4590314,LPC2022a, LPC2022b, ALS1991} for more details. The yard-sale model and its variants result in McKean-Vlasov stochastic differential equations \cite{VB2015} for which the drift and/or diffusion coefficients are functionals of the dependent variable.

Let $\rho(w,t)$ be the probability density of agents in the dimensionless wealth variable. Since the dimensionless quantity $w$ was obtained by dividing wealth $\theta$ by mean wealth $\mu_\theta$, the first moment of $\rho(w,t)$ is also $1$, that is $\int_{\mathbb{R}^+}w\rho =1$. Therefore, by construction, the mean (dimensionless) wealth is $1$. 

This leads to a Fokker-Planck equation for the normalized agent distribution  \begin{equation}\label{eq:mysm}
\ddt{\rho(w,t)}=\twoddw{}\left[\underbrace{\frac{\gamma}{2}\left(\int_0^\infty\,dx\,\kappa(w,x)\rho(x,t)\right)}_{:=D[w,\rho(\cdot,t)]}\rho(w,t)\right],
\end{equation} where \begin{equation}
\kappa(w,x):=\begin{cases}
(w \wedge x)^2 &\text{\quad if $w\wedge x <1$;} \\
(w \wedge x)&\text{\quad if $w\wedge x \geq 1$.} \end{cases}\end{equation} The diffusion coefficient is proportional to the expected, time-infinitesimal variance experienced by an agent interacting with the mean-field. In light of \cref{eq:mysmVar}, this is why $\kappa$ is seen to be $\phi^2$.


Both the zeroth and first moments of $\rho$ are conserved quantities; the former being the probability mass (implying conservation of number of agents) and the latter corresponding to conservation of total wealth, which we take canonically to be 1.

Throughout this paper we assume that each agent has positive wealth and the support of all distributions are $\mathbb{R}^+$.

\section{The Gini coefficient under normalization}
\label{sec:gini}
Here for the reader's convenience we define the Gini coefficient and show its invariance under a particular transformation of a distribution of wealth.

Let $Q:[0,\infty)\rightarrow [0,\infty)$ have finite zeroth and first moment denoted $N_Q$ and $W_Q$, respectively. Define $\mu_Q = W_Q/N_Q$. The transformation to a distribution with unit zeroth and first moment is given by \[q(w):=\frac{\mu_Q}{N_Q}Q\left(\mu_Qw\right).\] 

$Q$ can be viewed as a wealth distribution of a population of $N_Q$ agents with total population wealth $W_Q$. The Gini coefficient of economic inequality can be expressed as\footnote{This expression is easily derivable from the statement in the introduction of the Gini coefficient as scaled mean absolute deviation by writing out the expectation and using $|a-b| = (a+b) - 2\ysmmin{a}{b}$.}
\begin{align}
G[Q] &= 1 - \frac{1}{N_QW_Q}\int_0^\infty\,dw\,\int_0^\infty\,dx\,\ysmmin{w}{x}Q(w)Q(x) \nonumber\\
&=1 - \int_0^\infty\,dw\,\int_0^\infty\,dx\,\ysmmin{w}{x}q(w)q(x),\label{eq:gini}
\end{align}
which we can equivalently call $G[q]$ under the normalizing transformation $Q\mapsto q$. Thus $G$ is invariant under the transformation between $q$ and $Q$.

The Gini coefficient is $0$ for a density concentrated on mean wealth (that is, for a wealth-egalitarian society) whereas it approaches its upper limit of $1$ as the wealth is concentrated into an ever-vanishing proportion of the population. See \cite{BMBCB2023,ADL2018} for a discussion of the nonstandard properties of wealth distributions that maximize the Gini coefficient under the dynamics of \cref{eq:ysm}.

We will work in the space of normalized wealth distributions in which both the zeroth and first moments are unity.

\begin{lemma} \label{lem:gini} Two derivatives of the Fr\'echet derivative of the Gini coefficient under normalization is twice the density, that is 
    \begin{equation}
        \tottwodd{}{w}\frechet{G}{\rho} = 2\rho(w).
    \end{equation}
\end{lemma}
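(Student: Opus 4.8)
The plan is to begin from the normalized representation \eqref{eq:gini} of the Gini coefficient, extract its Fr\'echet derivative by a first-variation computation, and then reduce the two $w$-derivatives to an application of the fundamental theorem of calculus by splitting the $\min$ kernel at $x=w$.

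First I would compute $\frechet{G}{\rho}$. Writing $G[\rho]=1-\int_0^\infty\int_0^\infty \ysmmin{w}{x}\,\rho(w)\rho(x)\,dw\,dx$ and perturbing $\rho\mapsto\rho+\epsilon\,\psi$, the quadratic dependence on $\rho$ together with the symmetry of the kernel $\ysmmin{w}{x}$ in its two arguments produces a factor of two, so that the coefficient of the linear term in $\epsilon$ identifies
\[
\frechet{G}{\rho}=-2\int_0^\infty \ysmmin{w}{x}\,\rho(x)\,dx =: -2F(w).
\]
The constant $1$ contributes nothing to the variation. This is the step where care is most warranted: one must keep track of the symmetry-induced factor of two and the overall sign inherited from the ``$1-$'' in \eqref{eq:gini}, since an error here would rescale the final constant.

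Next I would make the dependence on $w$ explicit by splitting the integral at $x=w$, using $\ysmmin{w}{x}=x$ for $x<w$ and $\ysmmin{w}{x}=w$ for $x\ge w$:
\[
F(w)=\int_0^w x\,\rho(x)\,dx + w\int_w^\infty \rho(x)\,dx.
\]
Differentiating once by the Leibniz rule, the boundary term $w\rho(w)$ arising from the first integral cancels the $-w\rho(w)$ coming from differentiating the upper factor $w$ of the second term, leaving $F'(w)=\int_w^\infty\rho(x)\,dx$. Differentiating a second time gives $F''(w)=-\rho(w)$, and hence $\tottwodd{}{w}\frechet{G}{\rho}=-2F''(w)=2\rho(w)$, which is the claim. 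The only genuine subtlety is this cancellation of the two $w\rho(w)$ terms in the first differentiation; everything else is routine, modulo the mild regularity assumptions on $\rho$ (local integrability together with enough decay for the integrals to converge) that justify differentiating under the integral sign.
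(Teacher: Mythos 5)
Your proposal is correct and follows essentially the same route as the paper: extract $\frechet{G}{\rho}=-2\int_0^\infty\ysmmin{w}{x}\rho(x)\,dx$ via the symmetry of the kernel, then differentiate twice in $w$, with the split of the $\min$ at $x=w$ and the cancellation of the $w\rho(w)$ boundary terms being exactly the (unstated) computation behind the paper's one-line derivative steps. No gaps.
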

This is a routine calculation and follows from the symmetry of the kernel in the double integral that defines $G[\rho]$. We state the proof, despite its simplicity, since the result is repeatedly used.
\begin{proof}
Since the integral kernel of $G$ is symmetric in its arguments, the Fr\'echet derivative of $G$ is \[
\frechet{G}{\rho} = -2\int_{\mathbb{R}^+}\,dx\, \ysmmin{w}{x}\rho(x).
\] The first $w$-derivative of the Fr\'echet derivative is
\[
\totdd{}{w}\frechet{G}{\rho} = -2\int_w^\infty \,dx\,\rho(x)
\] and the next $w$-derivative is
\[
\tottwodd{}{w}\frechet{G}{\rho} = 2\rho(w).
\]
\end{proof}

\section{Bounding the rate of increase of the Gini coefficient}
\label{sec:giniTime}
We now turn to the main results of the paper: That the Gini coefficient, despite being monotonically increasing under the modified \ysm{} dynamics, can have a non-trivial bound on its rate of change in time. This bound may be carried over into a bound on the value of the Gini coefficient at a future time.

By $G(t)$, we mean $G[\rho(\cdot,t)]$ where $\rho$ is a solution to \cref{eq:mysm}.

\begin{theorem}[Increasing inequality]\label{thm:GiniIncr}
The Gini coefficient \cref{eq:gini} is monotone increasing under the dynamics of \cref{eq:mysm}.
\end{theorem}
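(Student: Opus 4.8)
The plan is to differentiate $G(t)=G[\rho(\cdot,t)]$ in time using the chain rule for the Fr\'echet derivative,
\[
\totdd{G}{t} = \int_0^\infty \frechet{G}{\rho}\,\ddt{\rho(w,t)}\,dw,
\]
and then substitute the evolution equation \cref{eq:mysm} for $\ddt{\rho}$. This replaces $\ddt{\rho}$ by $\twoddw{}\left[D[w,\rho(\cdot,t)]\,\rho(w,t)\right]$, after which I would integrate by parts twice in $w$ so as to move both spatial derivatives off the diffusion term and onto the Fr\'echet derivative.

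The crucial simplification comes from \cref{lem:gini}: after two integrations by parts the factor $\tottwodd{}{w}\frechet{G}{\rho}$ appears, and the lemma identifies it with $2\rho(w)$. Granting that the boundary terms vanish, this yields
\[
\totdd{G}{t} = \int_0^\infty \left(\tottwodd{}{w}\frechet{G}{\rho}\right) D[w,\rho(\cdot,t)]\,\rho(w,t)\,dw = 2\int_0^\infty D[w,\rho(\cdot,t)]\,\rho(w,t)^2\,dw.
\]
To conclude monotonicity I would then observe that the integrand is pointwise nonnegative: the kernel $\kappa(w,x)$ is nonnegative, so the diffusion functional $D[w,\rho(\cdot,t)]=\frac{\gamma}{2}\int_0^\infty \kappa(w,x)\rho(x,t)\,dx$ is nonnegative whenever $\rho\geq 0$ and $\gamma>0$, while $\rho(w,t)^2\geq 0$ automatically. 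Hence $\totdd{G}{t}\geq 0$, which is exactly the claim.

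The main obstacle is analytic rather than algebraic: justifying that the two integrations by parts produce no surviving boundary contributions. At $w=0$ I would argue that $\totdd{}{w}\frechet{G}{\rho}=-2\int_w^\infty \rho\,dx$ stays bounded while $D[w,\rho(\cdot,t)]\,\rho$ tends to zero, since $\kappa(w,x)\to 0$ as $w\to 0$; at the upper limit, sufficient decay of $\rho$ and of its associated flux kills the remaining terms. The conservation of the zeroth and first moments under \cref{eq:mysm} supplies precisely the integrability needed to make these limits legitimate. A secondary technical point is justifying the chain rule together with the interchange of $\partial_t$ and the spatial integral defining $G$, which again follows from the assumed regularity and decay of solutions on $\mathbb{R}^+$.

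Conceptually, the proof is the statement that the Gini coefficient is a Lyapunov functional whose monotonicity is forced entirely by the nonnegativity of the diffusion coefficient; the modified kernel $\kappa$ enters only through this sign, so no special feature of the square-root branch in \cref{eq:mysmMicroKernel} is required for the monotonicity itself.
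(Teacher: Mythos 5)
Your proposal follows the paper's own proof exactly: differentiate $G$ in time via the Fr\'echet derivative, substitute \cref{eq:mysm}, integrate by parts twice, invoke \cref{lem:gini} to obtain $2\int_0^\infty D[w,\rho]\,\rho(w)^2\,dw$, and conclude from the nonnegativity of $D$. Your additional remarks on the vanishing of boundary terms address a point the paper leaves implicit, but the argument is the same.
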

\begin{proof}
By direct calculation, we have that
\begin{align*}
\totdd{G}{t} &= \int_0^\infty\,dw\,\frechet{G}{\rho}\ddt{\rho} \\
&= \int_0^\infty\,dw\,\frechet{G}{\rho}\twoddw{}\left[\Ddiff{\rho(\cdot,t)}\rho(w,t)\right] &&\text{\qquad (by \cref{eq:mysm})} \\
&= \int_0^\infty\,dw\,\left(\twoddw{}\frechet{G}{\rho}\right)\Ddiff{\rho(\cdot,t)}\rho(w,t) &&\text{\qquad (via two IbP)}\\
&= 2\int_0^\infty\,dw\,\Ddiff{\rho(\cdot,t)}\rho(w,t)^2 &&\text{\qquad (by \cref{lem:gini})}\\
&\geq 0 &&\text{\qquad (since $\Ddiff{\rho} \geq 0$).}
\end{align*}
\end{proof}

The above result can be strengthened when there is mass that is not concentrated at the origin.

\begin{corollary}[Strictly increasing inequality]\label{cor:strictlyIncr}
Let $s>0$ and $\epsilon \in (0,1).$ If there exists $a>0$ such that \[\int_a^\infty\,dw\,\rho(w,s)>\epsilon\] then there exists $\delta>0$ such that \[\totdd{G}{t}\bigg|_{t=s}>\delta.\]
\end{corollary}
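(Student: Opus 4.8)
The plan is to start from the identity for $\totdd{G}{t}$ established in the proof of \cref{thm:GiniIncr}, namely $\totdd{G}{t} = 2\int_0^\infty dw\, \Ddiff{\rho(\cdot,t)}\,\rho(w,t)^2$. Every factor in this integrand is nonnegative, so I would simply discard the contribution from $w<a$ and work with the tail bound $\totdd{G}{t}\big|_{t=s} \geq 2\int_a^\infty dw\, \Ddiff{\rho(\cdot,s)}\,\rho(w,s)^2$. The goal is then to exhibit a strictly positive lower bound for this tail integral, which supplies the $\delta$ in the statement. Since the integrand is a product of the diffusion coefficient and $\rho^2$, I would bound each factor below separately on $[a,\infty)$.

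First I would bound the diffusion coefficient from below on $[a,\infty)$. For $w\geq a$, restricting the defining $x$-integral in $\Ddiff{\rho(\cdot,s)}$ to $[a,\infty)$ and noting that $w\wedge x\geq a$ there yields $\kappa(w,x)\geq \min(a,a^2)=:c_a>0$; the minimum simply accounts for the two branches of $\kappa$ according to whether $a$ exceeds unity. Invoking the hypothesis $\int_a^\infty \rho(x,s)\,dx>\epsilon$ then gives $\Ddiff{\rho(\cdot,s)} \geq \tfrac{\gamma}{2}c_a\int_a^\infty \rho(x,s)\,dx > \tfrac{\gamma}{2}c_a\epsilon$ uniformly for all $w\geq a$.

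The main obstacle is that these facts alone do not close the argument, because the remaining factor is $\int_a^\infty \rho(w,s)^2\,dw$, and the hypothesis $\int_a^\infty \rho(w,s)\,dw>\epsilon$ does not by itself bound this below: the mass could be spread arbitrarily thinly over a long tail, making the $L^2$ quantity small. The resolution is to use the normalization $\int_0^\infty w\rho\,dw=1$ that we have carried throughout. By Markov's inequality $\int_M^\infty \rho(w,s)\,dw\leq 1/M$, so taking $M=2/\epsilon$ confines all but at most $\epsilon/2$ of the tail mass to the bounded window $[a,M]$, giving $\int_a^M \rho(w,s)\,dw>\epsilon/2$; the same inequality applied at $M=a$ shows $a<1/\epsilon<M$, so the window is nonempty. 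On this bounded window Cauchy--Schwarz converts the $L^1$ bound into an $L^2$ bound: $\int_a^M \rho^2\geq \big(\int_a^M\rho\big)^2/(M-a) > (\epsilon/2)^2/M = \epsilon^3/8$.

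Combining the two estimates yields $\totdd{G}{t}\big|_{t=s} > \gamma c_a\epsilon\int_a^\infty \rho(w,s)^2\,dw \geq \gamma c_a\epsilon\cdot \epsilon^3/8$, so the corollary holds with the explicit value $\delta=\gamma\,\min(a,a^2)\,\epsilon^4/8>0$. I expect the lower bound on $\kappa$ and the resulting diffusion estimate to be routine; the one genuinely load-bearing idea is the use of the first-moment constraint, via Markov's inequality followed by Cauchy--Schwarz on a bounded window, to turn an $L^1$ tail lower bound into the $L^2$ lower bound that the $\rho^2$ factor demands.
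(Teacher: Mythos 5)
Your proof is correct and follows the same skeleton as the paper's: restrict the double integral to $[a,\infty)\times[a,\infty)$, bound $\kappa$ below by $a\wedge a^2$, use the hypothesis to control the diffusion factor, and then convert the $L^1$ tail bound into an $L^2$ bound on a bounded window via Cauchy--Schwarz (the paper calls the same step Jensen's inequality, applied on $[a,b]$). The one place you genuinely depart from the paper is in how the window is produced: the paper merely asserts the existence of some finite $b$ with $\int_a^b\rho(x,s)\,dx>\epsilon/2$, so its $\delta=C\epsilon^3/(4(b-a))$ depends on the particular density through $b$; you instead invoke the first-moment normalization $\int w\rho=1$ and Markov's inequality to take the explicit endpoint $M=2/\epsilon$, which yields a $\delta=\gamma(a\wedge a^2)\epsilon^4/8$ that is uniform over all densities satisfying the hypothesis. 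That is a modest but real strengthening of the conclusion, and your bookkeeping is also slightly more careful than the paper's (you retain the factors of $2$ and $\gamma/2$, which the paper silently drops in a direction that only weakens its bound).
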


\begin{proof}
Let $C = \gamma(a^2 \wedge a)$, which is positive. Making use of the calculation in \cref{thm:GiniIncr}, we have
\begin{align*}
\totdd{G}{t}\bigg|_{t=s} &=  2\int_{\mathbb{R}^+} \,dw\, D[w,\rho(\cdot,s)]\left(\rho(w,s)\right)^2\\
&\geq \gamma \int_a^\infty \,dw\,\int_a^\infty \,dx\, \kappa(w,x)\rho(x,s)\left(\rho(w,s)\right)^2\\
&\geq C\epsilon \int_a^\infty \,dw\,\left(\rho(w,s)\right)^2\\
&\geq C\epsilon \frac{\epsilon^2}{4(b-a)} &&(\text{Jensen's inequality})\\
&>0.
\end{align*}
The step prior to the application of Jensen's inequality uses that there must exist $b$ such that $0<a<b<\infty$ for which \[\int_a^b\,dx\,\rho(x,s)>\frac{\epsilon}{2}.\]

Letting \[\delta = \frac{C\epsilon^3}{4(b-a)}>0\] completes the proof.
\end{proof}

\cref{cor:strictlyIncr} implies that $G(t) \rightarrow 1$ as $t\rightarrow \infty$ as the function is bounded and monotone increasing; this also holds for the classical \ysm{}.

\begin{theorem}[Bounding the rate of inequality production]\label{thm:GiniBound}
If the initial datum $\rho_0(w) = \rho(w,0)$ is in $L^\infty(\mathbb{R}^+)$ and the dynamics keep $\rho(w,t)$ in $L^\infty(\mathbb{R}^+)$ up to time $s>0$ then the rate of change of the Gini coefficient \cref{eq:gini} is bounded by \begin{equation}\label{eq:GiniBound}
    \totdd{G}{t} \leq \gamma ||\rho(\cdot,t)||_\infty \left(1-G(t)\right)
\end{equation} for $0<t<s.$
\end{theorem}

\begin{proof} Let $t\in(0,s).$ Starting from the penultimate line in the main computation of the proof of \cref{thm:GiniIncr}, we have
    \begin{align*}
\totdd{G}{t} 
&= 2\int_0^\infty\,dw\,D[w,\rho]\left(\rho(w)\right)^2 \\
&\leq 2||\rho||_\infty\int_0^\infty\,dw\,D[w,\rho]\rho(w) \\
&=2||\rho||_\infty\int_0^\infty\,dw\,\rho(w)\frac{\gamma}{2}\int_0^\infty\,dx\,\kappa(w,x)\rho(x) &&\text{\quad (definition of $D[w,\rho]$)} \\
&\leq \gamma ||\rho||_\infty\int_0^\infty\,dw\,\rho(w)\int_0^\infty\,dx\,(w\wedge x)\rho(x) &&\text{\quad ($\kappa(w,x)\leq(w\wedge x)$ )}\\
&=  \gamma ||\rho||_\infty(1-G) &&\text{\quad (by \cref{eq:gini})}.
    \end{align*}
\end{proof}

Note that both \cref{thm:GiniIncr} and \cref{thm:GiniBound} can be extended to other binary transactions that modify the original \ysm{} if the transaction kernel meets two conditions.

\begin{corollary}\label{cor:otherModels}
    For a system with transaction kernel $\widetilde{\phi}:\mathbb{R}^+\times \mathbb{R}^+\rightarrow\mathbb{R}^+$ such that \[\widetilde{\phi}\left(w^i,w^j\right) \leq \ysmmin{w^i}{w^j}\] and \[ \widetilde{\phi}\left(w^i,w^j\right)^2 \leq \ysmmin{w^i}{w^j}\] on $\mathbb{R}^+\times \mathbb{R}^+$, we have that \[0\leq\totdd{G}{t}\leq \gamma ||\rho||_\infty (1-G).\]
\end{corollary}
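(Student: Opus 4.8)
The plan is to retrace the two proofs just given, replacing the diffusion coefficient of the specific model by the one induced by the general kernel $\widetilde{\phi}$, and to keep careful track of which of the two hypotheses does the work at each step. The essential preliminary observation is that the Kramers-Moyal/random-agent derivation leading to \cref{eq:mysm} produces a diffusion coefficient proportional to the second moment of the transaction increment; since that increment has magnitude $\sqrt{\gamma\Delta t}\,\widetilde{\phi}$, the resulting Fokker-Planck equation is again of the form \cref{eq:mysm} but with diffusion kernel $\widetilde{\kappa}(w,x)=\widetilde{\phi}(w,x)^2$ in place of $\kappa$, that is
\[
\widetilde{D}[w,\rho]=\frac{\gamma}{2}\int_0^\infty dx\,\widetilde{\phi}(w,x)^2\rho(x).
\]
One checks directly that the specific kernel \cref{eq:mysmMicroKernel} satisfies $\phi^2=\kappa$, so the two conditions of the corollary hold there with equality in each regime.

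For the lower bound I would run the computation of \cref{thm:GiniIncr} verbatim, using \cref{lem:gini} and two integrations by parts, which gives $\totdd{G}{t}=2\int_0^\infty dw\,\widetilde{D}[w,\rho]\rho(w)^2$. This is nonnegative because $\widetilde{\phi}\geq 0$ forces $\widetilde{D}\geq 0$; no size hypothesis on $\widetilde{\phi}$ is needed here. For the upper bound I would start from this same expression and copy the chain of inequalities in \cref{thm:GiniBound}: Hölder's inequality pulls out one factor of $\|\rho\|_\infty$, the definition of $\widetilde{D}$ is substituted, and then the second hypothesis $\widetilde{\phi}(w,x)^2\leq\ysmmin{w}{x}$ bounds $\widetilde{\kappa}$ by $\ysmmin{w}{x}$. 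The remaining double integral is exactly $1-G$ by \cref{eq:gini}, yielding $\totdd{G}{t}\leq\gamma\|\rho\|_\infty(1-G)$.

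The role of the first hypothesis, $\widetilde{\phi}\leq\ysmmin{w}{x}$, is not to drive either inequality but to guarantee that the model is a legitimate YSM variant: since $\gamma,\Delta t\in(0,1)$ we have $\sqrt{\gamma\Delta t}<1$, so $\widetilde{\phi}\leq\ysmmin{w}{x}$ ensures the transaction increment never exceeds the poorer agent's wealth and hence no agent is sent to negative wealth. This keeps the support in $\mathbb{R}^+$, which is what legitimizes the vanishing of the boundary terms in the integration by parts and the use of the Gini formula \cref{eq:gini} in both steps.

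The main thing to get right---and the only real subtlety---is that the diffusion kernel is the \emph{square} $\widetilde{\phi}^2$, not $\widetilde{\phi}$ itself. This is why the upper bound is controlled by the second hypothesis (a bound on $\widetilde{\phi}^2$) while the first hypothesis governs physical admissibility; conflating the two would suggest the wrong condition is responsible for the bound. Once the identification $\widetilde{\kappa}=\widetilde{\phi}^2$ is in hand, both inequalities are immediate transcriptions of the preceding proofs.
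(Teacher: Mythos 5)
Your proposal is correct and follows exactly the route the paper intends: the paper gives no separate proof of the corollary but remarks immediately after it that the first hypothesis corresponds to keeping agents at nonnegative wealth while the second corresponds to the fourth step ($\kappa(w,x)\leq(w\wedge x)$) in the proof of \cref{thm:GiniBound}, which is precisely the division of labor you identify after observing that the induced diffusion kernel is $\widetilde{\phi}^2$. Nothing further is needed.
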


The two conditions on the kernel correspond to not sending agents to negative wealth in the finite model for $\Delta t \in (0,1]$ and the final inequality in the proof of \cref{thm:GiniBound}, respectively. However, the class of kernels satisfying these conditions need not have the stake of a transaction be independent of the wealthier agent. The generalization in \cref{cor:otherModels} is of note as it includes more than just microtransactions with outcomes concentrated via Dirac masses on a few discrete outcomes. Therefore, the class of models for which \cref{thm:GiniBound} holds includes those for which the microtransactions have a continuum of outcomes.

To produce the bound on $G(t)$ we need a slightly modified version of the differential Gr\"onwall's inequality as presented in \cite{LCE2010}.

\begin{lemma}[A version of Gr\"onwall's differential inequality]\label{lem:gronwall}
Let $\eta, \phi,$ and $\psi$ be functions of $t\in[0,T]$. Further let $\eta$ be absolutely continuous and $\phi$ and $\psi$ be integrable on $[0,T]$. If for almost every $t$ \[\dot{\eta}\leq\phi\eta + \psi,\] then \[\eta(t)\leq\exp\left(\int_0^t\,ds\,\phi(s)\right)\left[\eta(0)+\int_0^t\,ds\,\psi(s)\exp\left(-\int_0^s\,dr\,\phi(r)\right)\right].\]
\end{lemma}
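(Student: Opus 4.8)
The plan is to use the classical integrating-factor method. Define $\Phi(t) := \exp\left(-\int_0^t\,ds\,\phi(s)\right)$. Since $\phi$ is integrable, the function $t\mapsto\int_0^t\phi$ is absolutely continuous on $[0,T]$, and composing it with the exponential shows that $\Phi$ is absolutely continuous, strictly positive, and satisfies $\dot\Phi = -\phi\Phi$ for almost every $t$, with the normalization $\Phi(0)=1$. This $\Phi$ is precisely the factor that will convert the differential inequality into one whose right-hand side no longer involves $\eta$.

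First I would form the product $\eta\Phi$. As a product of two absolutely continuous functions on the bounded interval $[0,T]$, the function $\eta\Phi$ is again absolutely continuous, so the product rule is valid almost everywhere and gives
\[
\totdd{}{t}\left(\eta\Phi\right) = \dot\eta\,\Phi + \eta\,\dot\Phi = \Phi\left(\dot\eta - \phi\eta\right).
\]
Invoking the hypothesis $\dot\eta \le \phi\eta + \psi$ together with $\Phi>0$ yields the pointwise bound $\totdd{}{t}\left(\eta\Phi\right) \le \psi\Phi$ for almost every $t$. Because $\eta\Phi$ is absolutely continuous, the fundamental theorem of calculus lets me integrate this from $0$ to $t$ to obtain $\eta(t)\Phi(t) - \eta(0) \le \int_0^t\,ds\,\psi(s)\Phi(s)$, where I have used $\Phi(0)=1$. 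Multiplying through by $\Phi(t)^{-1} = \exp\left(\int_0^t\phi\right) > 0$ and expanding $\Phi(s)$ in the integrand produces exactly the asserted estimate.

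The only delicate point, and the step I expect to require the most care, is the measure-theoretic bookkeeping rather than the algebra: one must confirm that $\eta\Phi$ is genuinely absolutely continuous (so that its almost-everywhere derivative integrates back to it via the fundamental theorem of calculus for absolutely continuous functions), and that the product rule above and the chain rule giving $\dot\Phi=-\phi\Phi$ hold in the absolutely continuous setting and not merely the everywhere-differentiable one. These rest on standard facts of real analysis, namely that a product of absolutely continuous functions on a compact interval is absolutely continuous and that composing a $C^1$ function with an absolutely continuous one preserves absolute continuity. Once these are cited, the remainder is the routine rearrangement sketched above, and the strictness of $\Phi>0$ guarantees the final division is harmless.
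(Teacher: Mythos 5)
Your proposal is correct and follows essentially the same route as the paper: multiply by the integrating factor $\exp\left(-\int_0^t\,ds\,\phi(s)\right)$, apply the hypothesis to bound the derivative of the product by $\psi$ times that factor, integrate from $0$ to $t$, and rearrange. The extra measure-theoretic care you take (absolute continuity of the product and validity of the chain rule) is a welcome refinement but does not change the argument.
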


\begin{proof}
By direct calculation first and then applying the assumed inequality, we have that \begin{align*}
\totdd{}{s}\left[\eta(s)\exp\left(-\int_0^s\,dr\,\phi(r)\right)\right] &= \exp\left(-\int_0^s\,dr\,\phi(r)\right)\left[\dot{\eta}(s)-\phi(s)\eta(s)\right]\\
&\leq \exp\left(-\int_0^s\,dr\,\phi(r)\right)\psi(s).
\end{align*}
Integrating from $0$ to $t$ in $s$ yields \[\eta(t)\exp\left(-\int_0^t\,dr\,\phi(r)\right)-\eta(0)\leq\int_0^t\,ds\, \psi(s)\exp\left(-\int_0^s\,dr\,\phi(r)\right),\] which upon re-arranging proves the lemma.
\end{proof}
If $\phi$ were assumed to be nonnegative then $\exp\left(-\int_0^s\,dr\,\phi(r)\right)$ is at most unity for all $s$, in which case the upper bound on that term by unity gives the standard inequality.

\begin{claim}
Let \[M_T:=\sup_{t\in[0,T]}||\rho(\cdot,t)||_\infty.\] For $t\in(0,T]$, \[G(t)\leq G(0)\exp\left(-\gamma M_T t\right) + \frac{\exp\left(\gamma M_T t\right)-1}{\exp\left(\gamma M_T t\right) }.\]
\end{claim}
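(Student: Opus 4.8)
The plan is to feed the differential inequality of \cref{thm:GiniBound} into the constant-coefficient version of the Gr\"onwall estimate in \cref{lem:gronwall}. First I would note that since $G(t)\in[0,1]$ we have $1-G(t)\geq 0$, so the definitional bound $||\rho(\cdot,t)||_\infty\leq M_T$ may be substituted into \cref{thm:GiniBound} without reversing the inequality. This yields, for almost every $t\in(0,T]$,
\[
\totdd{G}{t}\leq \gamma M_T\left(1-G(t)\right) = -\gamma M_T\,G(t)+\gamma M_T.
\]

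Next I would recognize this as exactly the hypothesis of \cref{lem:gronwall} with $\eta=G$, the constant function $\phi\equiv-\gamma M_T$, and the constant function $\psi\equiv\gamma M_T$. Applying the lemma gives
\[
G(t)\leq \exp\left(-\gamma M_T t\right)\left[G(0)+\int_0^t\,ds\,\gamma M_T\exp\left(\gamma M_T s\right)\right],
\]
where I have used $\int_0^s\phi\,dr=-\gamma M_T s$. The only remaining computation is the elementary integral $\int_0^t\gamma M_T\exp(\gamma M_T s)\,ds=\exp(\gamma M_T t)-1$, after which the bracket collapses to $G(0)+\exp(\gamma M_T t)-1$. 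Distributing the prefactor $\exp(-\gamma M_T t)$ and rewriting $1-\exp(-\gamma M_T t)=\left(\exp(\gamma M_T t)-1\right)/\exp(\gamma M_T t)$ reproduces the claimed bound verbatim.

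The calculation is short, and the two points deserving care are both minor. The first is the sign check noted above: passing from the time-dependent coefficient $||\rho(\cdot,t)||_\infty$ to its supremum $M_T$ is legitimate precisely because $1-G\geq 0$. The second, which I expect to be the only genuine obstacle, is the regularity needed to invoke \cref{lem:gronwall}, namely absolute continuity of $G$ on $[0,T]$. This is where the standing hypothesis that $\rho$ remains in $L^\infty(\mathbb{R}^+)$ up to the relevant time (equivalently $M_T<\infty$) does the work, since it controls the diffusion functional $\Ddiff{\rho}$ and hence renders the time-derivative computed in \cref{thm:GiniIncr} integrable; everything else is routine integration.
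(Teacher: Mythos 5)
Your proposal is correct and follows exactly the paper's route: the paper's proof is the single line ``Apply \cref{lem:gronwall} with $\eta = G$, $\phi = -\gamma M_T$, and $\psi = \gamma M_T$,'' and you have simply filled in the intermediate details (the sign check via $1-G\geq 0$ when replacing $||\rho(\cdot,t)||_\infty$ by $M_T$, and the elementary integral). Your added remarks on regularity are sensible but not part of the paper's argument.
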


\begin{proof}
Apply \cref{lem:gronwall} with $\eta = G$, $\phi = -\gamma M_T$, and $\psi = \gamma M_T$.
\end{proof}

Thus the magnitude of the exponential rate of convergence to oligarchy under the modified YSM is at most $\gamma M_T$ in a finite time horizon $[0,T]$.

\begin{claim}\label{claim:integratedBound2}
Let $M(t) := ||\rho(\cdot,t)||_\infty$. For $t>0$, \begin{equation}\label{eq:intGiniBound} G(t)\leq G(0)\exp\left(-\gamma\int_0^t\,ds\,M(s)\right) + \frac{\exp\left(\gamma\int_0^t\,ds\,M(s)\right)-1}{\exp\left(\gamma\int_0^t\,ds\,M(s)\right)}.\end{equation}
\end{claim}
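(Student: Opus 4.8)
The plan is to invoke the Gr\"onwall inequality of \cref{lem:gronwall} in essentially the same way as the preceding claim, except that I would retain the genuine time dependence of $M(t)$ rather than replacing it by its supremum $M_T$. First I would take the differential inequality established in \cref{thm:GiniBound}, namely $\dot{G}\leq\gamma M(t)(1-G)$, and rearrange it into the hypothesis form of \cref{lem:gronwall} by writing $\dot{G}\leq-\gamma M(t)G+\gamma M(t)$; this identifies $\eta=G$, $\phi=-\gamma M(t)$, and $\psi=\gamma M(t)$.

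Next I would substitute these directly into the conclusion of \cref{lem:gronwall}. Writing $I(t):=\gamma\int_0^t\,ds\,M(s)$ for brevity, the outer exponential prefactor becomes $\exp(-I(t))$ and the integrand inside the bracket becomes $\gamma M(s)\exp(I(s))$. The key step --- and the one bit of genuine content beyond the constant case already treated --- is to notice that because $\psi=-\phi$, this integrand is exactly $\totdd{}{s}\exp(I(s))$. Consequently the inner integral telescopes to $\exp(I(t))-1$, and distributing $\exp(-I(t))$ gives
\[
G(t)\leq G(0)\exp\left(-I(t)\right)+1-\exp\left(-I(t)\right).
\]
Rewriting $1-\exp(-I(t))$ as $(\exp(I(t))-1)/\exp(I(t))$ then produces the stated inequality verbatim.

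I do not expect a real obstacle here; the only point needing care is verifying the hypotheses of \cref{lem:gronwall}. One needs $\phi$ and $\psi$ integrable on $[0,t]$, which amounts to integrability of $M(s)=||\rho(\cdot,s)||_\infty$, inherited from the $L^\infty$ regularity of the solution assumed up to time $s$ exactly as in \cref{thm:GiniBound}; and one needs $\eta=G$ absolutely continuous, which follows from the differentiability already exploited in computing $\dot{G}$. It is worth remarking that this bound strictly refines the previous claim: since $M(s)\leq M_T$ for all $s\in[0,T]$, one has $I(t)\leq\gamma M_T t$, so the time-integrated rate $\int_0^t\,ds\,M(s)$ replaces the cruder worst-case product $M_T\, t$.
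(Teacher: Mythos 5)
Your proposal is correct and follows essentially the same route as the paper: both apply \cref{lem:gronwall} with $\eta=G$, $\phi=-\gamma M(s)$, $\psi=\gamma M(s)$, and both observe that the integrand $\gamma M(s)\exp\left(\gamma\int_0^s\,dr\,M(r)\right)$ is an exact derivative, so the inner integral collapses to $\exp\left(\gamma\int_0^t\,ds\,M(s)\right)-1$. Your additional remarks on the integrability hypotheses and on how this bound refines the constant-$M_T$ claim are accurate but not part of the paper's (one-line) argument.
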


\begin{proof}
Apply \cref{lem:gronwall} with $\eta = G$, $\phi = -\gamma M(t)$, and $\psi = \gamma M(t)$. Note that the term \[\int_0^t\,ds\,\psi(s)\exp\left(-\int_0^s\,dr\,\phi(r)\right)\] simplifies to \[\exp\left(\gamma \int_0^t\,ds\,M(s)\right)-1.\]
\end{proof}

The evolutionary, integro-differential equation \cref{eq:mysm} was numerically solved for a random initial condition to demonstrate the bound holding in practice. The diffusion-velocity deterministic particle method of \cite{MR3645392} was used for the time-evolution. We present results with $\gamma = 0.1$, a time step $\Delta t =0.05$, and $N = 10^3$ particles iterated $M = 250$ times. \textit{Mathematica's} built-in kernel density estimate was used to recover the density at each time step from the particle system. The density estimate is necessary for both the evolution of the particles (as their velocities depend on the gradient of the logarithm of the density) and for recording $||\rho(\cdot,t)||_\infty$ at each discrete $t$ for \cref{eq:GiniBound}. The particle positions were evolved using a forward Euler scheme.

\begin{figure}[tbhp]
\centering
\includegraphics[scale = 0.7]{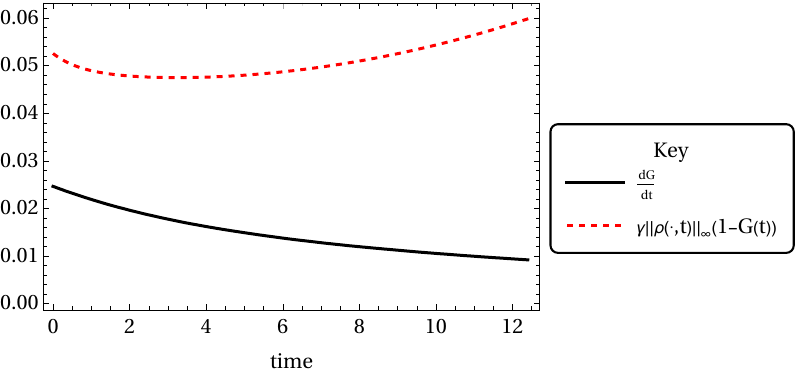}
\caption{Plot of $t\mapsto dG\left[\rho\left(\cdot,t\right)\right]/dt$ and \cref{eq:GiniBound} from a deterministic particle, diffusion-velocity method simulation of \cref{eq:mysm} the modified \ysm{} equation of motion.}
\label{fig:a}
\end{figure}

\begin{figure}[tbhp]
\centering
\includegraphics[scale = 0.75]{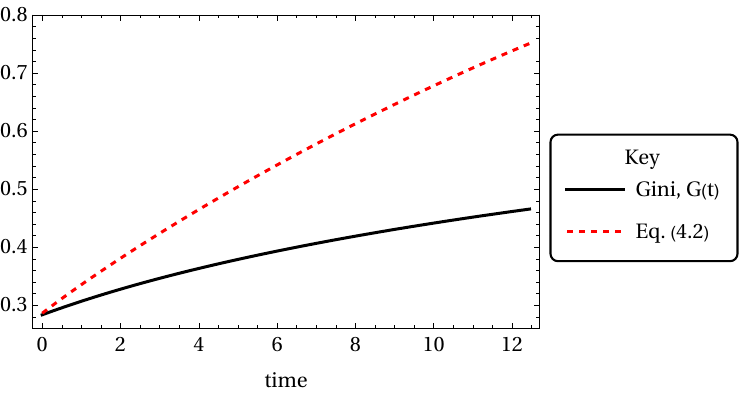}
\caption{Plot of $t\mapsto G\left[\rho\left(\cdot,t\right)\right]$ and \cref{eq:intGiniBound} from a deterministic particle, diffusion-velocity method simulation of \cref{eq:mysm} the modified \ysm{} equation of motion.}
\label{fig:b}
\end{figure}

In both \cref{fig:a,fig:b}, the abscissa is the time variable. In the first figure, the discrete rate of change of the Gini coefficient $\left(G(t+\Delta t) - G(t)\right)/\Delta t$ and the bound $\gamma ||\rho(\cdot, t)||_\infty \left(1-G(t)\right)$ are shown. In the second plot, the Gini coefficient at each time is shown with the integrated version of the bound from \cref{claim:integratedBound2}, using a left Riemann sum approximation to the integral. 

For this experiment, the bound becomes looser as time continues, which is attributable to the accumulation of probability density ever-closer to the origin as oligarchy forms. This accumulation increases $||\rho(\cdot,t)||_\infty$ and indicates that an improved bound would need to do away with the supremum norm of $\rho(\cdot,t)$.

\section{Results of asymptotic analysis}
\label{sec:asymptotics}

\cref{thm:GiniIncr} asserts that inequality monotonically increases under the dynamics of \cref{eq:mysm}, which leads to total wealth condensation. The present state of affairs in the world is not one of total wealth condensation so in order for this model to have explanatory or predictive power, it is necessary to introduce a regularization term and ask if the resulting wealth distributions are close to those of the world.

In \cite{MR3872473}, real distributions of wealth were compared to those arising asymptotically from the YSM with differing combinations of: (1) taxation and redistribution, (2) a bias in the stochastic transaction in favor of the wealthier agent, and (3) negative wealth. It was found that the model's steady state distributions match very well with real wealth distributions. In particular we quote from the summary article \cite{MR4686618} that the YSM, which is a three-parameter model, with the above-mentioned embellishments\begin{quote}\textit{admits steady-state solutions whose Lorenz curves match those of empirical wealth distributions to within one fifth of one percent for American wealth data between 1987 and 2019, and to within a half of a percent for 14 European countries that were associated with the European Central Bank circa 2010.}
\end{quote}

Each of the embellishments could be included in the modified model of \cref{sec:mysm} but the analysis of the resulting system is too far from the main message of this paper: namely, that a Gini -- Gini production bound exists for a non-empty class of unbiased, binary, wealth-conserving kinetic asset exchange models. 

We want to convey to readers that our modification has promise be equally successful in fitting empirical data. To do so, we analyze the steady state of the modified system with just redistribution added and compare it to the classical \ysm{} with redistribution. 

Thus, rather than carrying out comparisons between the redistributive, modified \ysm{} and empirical data from scratch, we instead show that the asymptotics of the redistributive, modified YSM match those of the classical model.

To regularize the modified YSM, we introduce a deterministic taxation and redistribution term that reverts the population to the mean wealth. This scheme can be viewed in two equivalent ways: \begin{itemize}
    \item As an exogenous force that collects the same fraction of each agent's wealth, pools the total, and redistributes the total equally amongst the agents; or
    \item As a partial divorce model so that in addition to each stochastic transaction between agents there is also a deterministic exchange towards their pairwise mean wealth.
\end{itemize} These two perspectives yield the same continuum equation of motion.

The binary transaction for the modified YSM with redistribution is 
\begin{equation}\label{eq:mysmrMicroTransact}
\begin{pmatrix}w_{(k+1)\Delta t}^i\\w_{(k+1)\Delta t}^j\end{pmatrix} = \begin{pmatrix}w_{k\Delta t}^i\\w_{k \Delta t}^j\end{pmatrix} + \left[\chi\Delta t\left(w^j_{k\Delta t}-w^i_{k\Delta t}\right) +\sqrt{\gamma\Delta t} \phi\left(w^i_{k \Delta t},w^j_{k \Delta t}\right)\eta\right] \begin{pmatrix}1\\-1\end{pmatrix},
\end{equation} where $\chi\in[0,\frac{1}{2})$ and the other definitions and parameters from \cref{eq:mysmMicroTransact} and \cref{eq:mysmMicroKernel} carry over. This is no longer a martingale-like process.

The associated equation of motion for the agent density in wealth-space is \begin{equation}\label{eq:mysmr}
\ddt{\rho(w,t)}=-\ddw{}\left[\chi(1-w)\rho(w,t)\right]+\twoddw{}\left[\frac{\gamma}{2}\left(\int_0^\infty\,dx\,\kappa(w,x)\rho(x,t)\right)\rho(w,t)\right],
\end{equation} where $\kappa$ is as before.\footnote{Starting from \cref{eq:mysmr}, both $\gamma$ and $\chi$ lose their initial restrictions imposed from the binary transactions and we demand only their non-negativitity.} The second term on the left hand side is the same diffusion term from \cref{sec:mysm}; this arose from a mean-field assumption that produces the integration against the density. The first term is a deterministic drift term that forces the density to revert to the mean of the initial condition, which we have taken to always be unity. In this way the drift can be viewed as the deterministic part of an Ornstein-Uhlenbeck process. The first-order drift coefficient is easily seen to be the time-infinitesimal expected change in wealth of an agent at wealth $w$, \[\lim_{\Delta t \rightarrow 0}\frac{1}{\Delta t}\mathbb{E}\left[\left(w^i_{(k+1)\Delta t}-w^i_{k\Delta t}\right)\bigg|w^i_{k\Delta t} = w\right],\] where the expectation is taken over both the outcomes of $\eta$ and possible transaction partners.

If $\gamma = 0$ then the stochastic transactions would stop occurring and the deterministic redistribution would occur with rate $\chi$ forcing the density to concentrate on the mean. The case of $\gamma = 0$ and $\chi>0$ corresponds to a minimization of the potential energy functional \[U[\rho] = \int_{\mathbb{R}^+}\,dw\,\left(\frac{w^2}{2}-w\right)\rho(w)\] in the long time limit.

In this section, we closely follow the procedure of asymptotic analysis laid out in \cite{BMB2017}. In what follows, $\rho_\infty(w)$ is the asymptotic state of \cref{eq:mysmr} for which the time derivative is set to zero and the ODE is studied. The asymptotic state $\rho_\infty(w)$ may be interpreted as the distribution of wealth such that transactions still occur across the entire population and in the finite-agent model each agent would still experience changes in their own wealth, but the population-wide statistics no longer change. That is to say, as individual fortunes change, any sampling and measurement procedure would yield the same outcome once the population has arrived at the invariant measure.

Let us emphasize that the reason for comparing $\rho_\infty$ of the redistributive versions of the modified \ysm{} and classical \ysm{} is that the previous work of \cite{MR3872473} had success with fitting the steady-states to empirical data.

The details of the analysis are in \cref{sec:asymptoticsDet}. At $w\ll 1$, \[\rho_\infty(w) = \frac{c_0}{w^{2+2\chi/\gamma}}\exp\left(-\frac{2\chi}{\gamma w}\right),\] where $c_0>0$. This small $w$ behavior has the same characteristic shape as the classic \ysm{} with redistribution \cite{BMB2017}. 

The more involved computation is for $w\gg 1$ for which $\rho_\infty$ is found to be Gaussian, \begin{equation}
    \rho_\infty(w)\approx c_\infty \exp(-aw^2-bw)
\end{equation} for $a>0$, $b\in \mathbb{R}$, and $c_\infty>0$. The constants are defined in \cref{sec:asymptoticsDet}.

The classical \ysm{} with redistribution also has a Gaussian tail. Hence, for both $w\ll 1$ and $w\gg 1$, the redistributive versions of the modified and classical \ysm{} have the same characteristic shapes. Since the latter has successfully been used to fit empirical data, we take these results to indicate that the modified system holds similar promise.

\section{Conclusions}
\label{sec:conclusions}
We introduced a variant of the \ysm{} for which the Gini coefficient of economic inequality monotonically increases under the resulting continuum dynamics yet the rate of change in time of the Gini coefficient permits an upper bound. The way in which this bound holds is similar to the entropy -- entropy production bounds for nonlinear Fokker-Planck equations. In the econophysics case, the twin results of \cref{cor:strictlyIncr} and \cref{thm:GiniBound} may be interpreted as the adage wealth begets wealth but with the constraint that the accumulation of wealth into a small portion of society begins to limit how quickly more can be extracted from the poor.

The method of bounding $\dot{G}$ used techniques from the analysis of deterministic equation of motions. It is not clear if or how this bound could be applied to the stochastic, finite-agent models that precede the diffusion approximation. Nor is it clear how a similar bound, phrased in a probabilistic way, could be derived directly from the stochastic, finite-agent system.

Since the bound in \cref{thm:GiniBound} includes $||\rho||_\infty$, the $L^\infty$ norm of the density, it is natural to ask how the bound could be tested with empirical data since the transition from empirical data to a density function involves user choices (e.g., width in kernel density estimation) that affect the essential supremum of the resulting density. Thus we consider the development of a method that examines whether real-world economic data obey this bound to be an open question.

By showing that the asymptotics of this modified model with redistribution match that of the original \ysm{} with redistribution, we have put forward an argument that the macroscopic asymptotics are robust to changes of the microscopic transactions. Thus many microscopic transactions (and not \textit{just} a practitioner's favorite) may give rise to simple yet accurate descriptions of the evolution of wealth that depend on dramatically fewer parameters than most economic theories.

Finally, the Gini -- Gini production inequality of \cref{thm:GiniBound} was shown to hold not only for the particular modification \cref{eq:mysm} but also for a broader class of models described in \cref{cor:otherModels}. In doing so, we can consider a class of unbiased, wealth-conserving kinetic asset exchange models, which produce non-linear, integro-differential evolution equations of the McKean-Vlasov type, that possess a ``thermodynamical'' second law of Gini coefficient monotonicity and obey the Gini -- Gini production inequality, \[0<\totdd{}{t}\bigg|_{t=s}\left(G\circ \rho_t\right) \leq \gamma ||\rho_s||_\infty (1-G[\rho_s]).\]

\appendix
\section{Details of asymptotic analysis}
\label{sec:asymptoticsDet}

\subsection{Analysis for equilibrium at \texorpdfstring{{\boldmath$w\ll 1$}}{w much less than 1}}
For $w\ll 1$, $\kappa(w,x) = \ysmmin{w}{x}^2$ thus \[D[w,\rho] = \frac{\gamma}{2}\left(\int_0^w\,dx\,x^2\rho(x,t)+w^2\int_w^\infty\,dx\,\rho(x,t)\right).\]
We approximate this by $D[w,\rho]\approx \frac{\gamma}{2}w^2$. This assumption demands the same \textit{a posteriori} justification as noted in \cite{BMB2017}. At equilibrium \[\chi(1-w)\rho_{\infty}(w) = \totdd{}{w}\left[\frac{\gamma}{2}w^2\rho_\infty(w)\right].\]

This is solved by \[\rho_\infty(w) = \frac{c_0}{w^{2+2\chi/\gamma}}\exp\left(-\frac{2\chi}{\gamma w}\right),\] where $c_0$ is a positive constant. This agrees with the results in \cite{BMB2017} where this analysis was carried out for the classic \ysm{} with redistribution. Thus the justification for the earlier assumption about the behavior of $D[w,\rho]$ at $w\ll 1$ is equally valid as in the original paper.

\subsection{Analysis for equilibrium at \texorpdfstring{{\boldmath$w\gg 1$}}{w much greater than 1}}
Upon integrating once and re-arranging, the large-$w$ equilibrium condition is \begin{equation}\label{eq:largeW}
\totdd{\log \rho_\infty(w)}{w} = \frac{\chi(1-w)-\totdd{D[w,\rho_\infty]}{w}}{D[w,\rho_\infty]}.
\end{equation}

We first investigate the behavior of $D[w,\rho]$ for $w\gg 1$. The piecewise conditions of $\kappa(w,x)$ partially lose their dependence on $w$ so that \begin{align*}D[w,\rho]&=\frac{\gamma}{2}\int_0^\infty\,dx\,\rho(x)\begin{cases}
x^2 &\text{\quad if $ x <1$;} \\
(w \wedge x)&\text{\quad if $ x \geq 1$,}\end{cases}\\
&=\frac{\gamma}{2}\left(\int_0^1\,dx\,x^2\rho(x)+\int_1^w\,dx\,x\rho(x)+w\int_w^\infty\,dx\,\rho(x)\right).
\end{align*}

Let \[c_1:=\int_0^1\,dx\,x^2\rho(x)\] and \[c_2 := \int_0^\infty\,dx\,x\rho(x)-\int_0^1\,dx\,x\rho(x),\] noting that $c_2>0$. Thus \begin{equation}\label{eq:DwithConstants}
D[w,\rho] = \frac{\gamma}{2}\left(c_1+c_2-\int_w^\infty\,dx\,x\rho(x)+w\int_w^\infty\,dx\,\rho(x)\right).
\end{equation}

Therefore we also have \begin{equation}\label{eq:largeWdDdw}
\totdd{D[w,\rho]}{w} =\frac{\gamma}{2}\int_w^\infty\,dx\,\rho(x).
\end{equation}

We make the ansatz that \begin{equation}\label{eq:largeWAnsatz}
    \rho_\infty(w)\approx c_\infty \exp(-aw^2-bw)
\end{equation} for $w\gg1$, $a>0$, $b\in \mathbb{R}$, and $c_\infty>0$. In this case, \begin{equation}
    \label{eq:logLargeW}
    \totdd{\log \rho_\infty(w)}{w} = -2aw -b.
\end{equation}

We simplify the right hand side of \cref{eq:largeW} term-by-term under the ansatz \cref{eq:largeWAnsatz} to check whether it is an affine function of $w$, and if so identify the appropriate constants. 

In doing so, we make repeated use of the asymptotic approximation \[\text{erfc}(z) \approx \frac{1}{z\sqrt{\pi}}\exp(-z^2).\]

Starting first with \cref{eq:largeWdDdw}
\begin{align}\int_w^\infty\,dx\,\rho_\infty(x) &= \frac{c_\infty}{2}\sqrt{\frac{\pi}{a}}\exp\left(\frac{b^2}{4a}\right)\text{erfc}\left(\frac{2aw+b}{2\sqrt{a}}\right) \nonumber \\
&\approx \frac{c_\infty}{2}\sqrt{\frac{\pi}{a}}\exp\left(\frac{b^2}{4a}\right) \frac{2\sqrt{a}}{2aw+b}\frac{1}{\sqrt{\pi}}\exp\left(-\left(\frac{2aw+b}{2\sqrt{a}}\right)^2\right) \nonumber \\
&=\frac{c_\infty}{2aw+b}\exp(-aw^2-bw).\label{eq:asympAnalysis1} 
\end{align}

Next we turn to the terms with $w$ dependence in \cref{eq:DwithConstants}. The first of which is 
\begin{align}\int_w^\infty\,dx\,x\rho_\infty(x) &= c_\infty\frac{\exp\left(-aw^2-bw\right)}{2a}-\frac{b}{2a}\left[\frac{c_\infty}{2}\sqrt{\frac{\pi}{a}}\exp\left(\frac{b^2}{4a}\right)\text{erfc}\left(\frac{2aw+b}{2\sqrt{a}}\right)\right] \nonumber \\
&\approx c_\infty\frac{\exp\left(-aw^2-bw\right)}{2a}-\frac{b}{2a}\left[\frac{c_\infty}{2aw+b}\exp(-aw^2-bw)\right] \nonumber \\
&=\frac{c_\infty}{2a}\exp(-aw^2-bw)\left(1-\frac{b}{2aw+b}\right) \nonumber \\
&=w\frac{c_\infty}{2aw+b}\exp(-aw^2-bw).\label{eq:asympAnalysis2}
\end{align}

Combining \cref{eq:asympAnalysis1,eq:asympAnalysis2}, we have that \[-\int_w^\infty\,dx\,x\rho_\infty(x)+w\int_w^\infty\,dx\,\rho_\infty(x) \approx 0,\] which implies $D[w,\rho_\infty] \approx \frac{\gamma}{2}(c_1+c_2).$

Taking these results together, we see that upon using \cref{eq:largeWAnsatz} and approximating asymptotically, \cref{eq:largeW} simplifies to \[\totdd{\log \rho_\infty(w)}{w}=\frac{2\chi}{\gamma(c_1+c_2)}(1-w)-\underbrace{\frac{c_\infty}{c_1+c_2}\frac{\exp\left(-aw^2-bw\right)}{2aw+b}}_{\text{subdominant corrections}}.\]

In particular, we note from \cref{eq:logLargeW} that $a = \frac{\chi}{\gamma(c_1+c_2)}>0$.

Thus the characteristic shape of the time-asymptotic wealth profile is non-analytic and depleted at the origin and Gaussian at large $w$. This is the same profile found for the classic \ysm{} with redistribution.

\section*{Acknowledgments}
We thank M. Johnson and D. Gentile of Tufts University for productive conversations.

\bibliographystyle{siamplain}
\bibliography{references}
\end{document}